\begin{document}

\title{A Linear Push-Pull Average Consensus Algorithm for Delay-Prone Networks}

\author{Evagoras~Makridis and Themistoklis~Charalambous
\thanks{The authors are with the Department of Electrical and Computer Engineering, School of Engineering, University of Cyprus, 1678 Nicosia, Cyprus. E-mails: \texttt{\footnotesize surname.name@ucy.ac.cy}. T. Charalambous is also a Visiting Professor at the Department of Electrical Engineering and Automation, School of Electrical Engineering, Aalto University, 02150 Espoo, Finland.}
\thanks{This work was partly supported by the European Research Council (ERC) Consolidator Grant MINERVA (Grant agreement No. 101044629).}
}

\maketitle
\begin{abstract}
In this paper, we address the average consensus problem of multi-agent systems for possibly unbalanced and delay-prone networks with directional information flow. We propose a linear distributed algorithm (referred to as RPPAC) that handles asynchronous updates and time-varying heterogeneous information delays. Our proposed distributed algorithm utilizes a surplus-consensus mechanism and information regarding the number of incoming and outgoing links to guarantee state averaging, despite the imbalanced and delayed information flow in directional networks. The convergence of the RPPAC algorithm is examined using key properties of the backward product of time-varying matrices that correspond to different snapshots of the directional augmented network.
\end{abstract}

\begin{keywords}
distributed algorithms, push-pull consensus, average consensus, directed graphs, time-varying heterogeneous delays.
\end{keywords}

\section{Introduction}
Distributed consensus algorithms have recently gained significant prominence due to their widespread applicability in various domains, including wireless sensor networks, multi-agent systems, and smart power grids. The main objective of these algorithms is to enforce a network of interconnected agents (or nodes) to collectively converge towards a common value, known as the \emph{consensus}, by means of local information exchange (see \cite{olfati2007consensus} for an overview of consensus methods). Among the diverse consensus problems, \emph{average consensus} is distinguished as a fundamental challenge, wherein the agents aim to cooperatively compute the average of their initial values held by each individual node. This problem has gained significant attention for its relevance in tasks such as distributed formation control \cite{ren2007multi,fax2004information,yang2023distributed}, distributed estimation and filtering \cite{wang2017diffusion,talebi219distributed,ren2017distributed,lian2020distributed}, and distributed optimization \cite{xin2018linear,pu2020push,nedic2023ab,wang2022surplus,10273441}.

Substantial research on distributed average consensus has been conducted in the context of bidirectional networks where the information between agents flow in both directions forming an undirected graph \cite{xiao2004fast,esteki2022fastest,sebastian2023accelerated}. However, in most real-world multi-agent systems, agents communicate over wireless channels in which they have different transmitting power capabilities and experience different levels of interference. As a consequence the communication links become inherently directional forming directed graphs (\emph{digraphs}). Although directional networks provide a more realistic characterization of the underlying communication network, further complexities and challenges arise when agents aim to achieve average consensus in a distributed manner. For instance, the imbalance in the flow of information, due to the directed links, may lead to slower convergence or even incapability in reaching average consensus. 

To overcome this issue, the works in \cite{dominguez2010coordination, franceschelli2010distributed, cai2012average} have provided different approaches where agents update their state variables through a linear combination of the incoming (received and owned) information. A necessary condition towards reaching average consensus is to prevent information from becoming trapped at a specific agent and to guarantee that the information flows throughout the network, thus, the underlying digraph should be strongly connected\footnote{A digraph is called strongly connected if there exists a directed path between any pair of nodes in the network. This ensures that the information propagates and reaches all agents in the network.}. The \emph{Ratio Consensus} (RC) algorithm proposed in \cite{dominguez2011distributed}, proved to be able to reach average consensus by computing in an iterative way the ratio of two concurrently running linear iterations: one to compute a weighted average of the network's initial values, and one to track the information flow imbalance. Nevertheless, such methods that require agents to send their state variables by weighting them according to the number of their outgoing links (forming a column-stochastic (CS) matrix of weights), require additional computation per iteration for each agent, that is the nonlinear computation of the ratio of the two concurrently running iterations. Another approach towards mitigating the information flow imbalance of digraphs for reaching average consensus has been proposed in \cite{cai2012average}. In particular, the authors proposed a surplus-based approach where the information flow imbalance is handled by augmenting the state of each agent with an additional variable (often called \emph{surplus}) that locally tracks individual state updates through the assignment of weights on both the incoming and outgoing links that form both row-stochastic (RS) and CS matrices, respectively.

Taking one step further towards more realistic scenarios, one should take into consideration that the information flows over delay-prone communication links due to network congestion and packet retransmissions requested as a result of decoding errors detected at the receiving agent \cite{xiao2008consensus,hadjicostis2013average,zhao2017performance,makridis2023utilizing,makridis2023harnessing}. This highlights the necessity of ensuring resilience to delays towards reaching average consensus. Under the conditions on delay-prone directed information exchange between agents, the authors in \cite{hadjicostis2013average} proposed a robustified version of the ratio consensus algorithm (hereinafter referred to as RRC) which is able to reach asymptotic average consensus in the presence of bounded time-varying delays. However, the nonlinear nature of RRC makes it difficult to find tight bounds on the convergence rate of the algorithm itself \cite{hadjicostis2013average}, but also to various consensus-based distributed optimization algorithms that use RC as its consensus protocols \cite{lin2022subgradient,xi2017dextra,nedic2017achieving,xi2017add,xi2017distributed}.

This paper aims at developing a linear discrete-time distributed average consensus algorithm which operates over directed networks, and can handle heterogeneous and time-varying information delays. Specifically, we propose a linear surplus-based distributed protocol that enforces nodes in the network to converge to the average of their initial values, although the communication links are directional and prone to time-varying delays. To the best of our knowledge, this is the first linear (push-pull) asynchronous average consensus algorithm that reaches the exact average of the agents' initial values over delay-prone directional links in a network. The characteristics of our proposed algorithm in comparison with main average consensus algorithms that operate in digraphs, are emphasized in Table~\ref{tab:comparison}.

\begin{table}[h]
    \small
    \centering
    \renewcommand{\arraystretch}{1.3}
    \begin{tabular}{c||c|c|c|c}
    \hline \text { Algorithm } & \text { Linear } & \text { RS } & \text {CS } & \text { Delays } \\
    \hline \hline 
    \text { \cite{dominguez2011distributed} } & \xmark & \xmark & \cmark & \xmark \\\hline 
    \text { \cite{hadjicostis2013average} } & \xmark & \xmark & \cmark & \cmark \\\hline 
    \text { \cite{cai2012average} } & \cmark & \cmark & \cmark & \xmark \\\hline 
    \text { \textbf{RPPAC} } & \cmark &  \cmark & \cmark & \cmark \\
    \hline
    \end{tabular}
    \caption{Comparison of the characteristics of the main average consensus algorithms in digraphs.}
    \label{tab:comparison}
\end{table}

\section{Preliminaries}\label{sec:background}
\subsection{Network Model}
Consider a group of $n>1$ agents communicating over an unreliable time-invariant and directed network. The interconnection topology of the communication network is modeled by a digraph $\set{G}=(\set{V}, \set{E})$. Each agent $v_j$ is included in the set of digraph nodes $\set{V}=\{v_1, \cdots, v_n\}$. The interactions between agents are included in the set of digraph edges $\set{E} \subseteq \set{V} \times \set{V}$. The total number of edges in the network is denoted by $m=|\set{E}|$. A directed edge $\varepsilon_{ji} \triangleq (v_j, v_i) \in \set{E}$ indicates that node $v_j$ receives information from node $v_i$, \ie $v_i \rightarrow v_j$. The nodes that transmit information to node $v_j$ directly are called in-neighbors of node $v_j$, and belong to the set $\inneighbor{j}=\{v_i \in \set{V} | \varepsilon_{ji} \in \set{E}\}$. The number of nodes in the in-neighborhood set is called in-degree and is denoted by $\indegree{j} = |\inneighbor{j}|$. The nodes that receive information from node $v_j$ directly are called out-neighbors of node $v_j$, and belong to the set $\outneighbor{j}=\{v_l \in \set{V} | \varepsilon_{lj} \in \set{E}\}$. The number of nodes in the out-neighborhood set is called out-degree and is denoted by $\outdegree{j}= |\outneighbor{j}|$. Each node $v_j \in \set{V}$ has immediate access to its own local state, and thus we assume that the corresponding self-loop is available $\varepsilon_{jj} \in \set{E}$, although it is not included in the nodes' out-neighborhood and in-neighborhood. In $\mathcal{G}$ a node $v_i$ is reachable from a node $v_j$ if there exists a path from $v_j$ to $v_i$ which respects the direction of the edges. The digraph $\mathcal{G}$ is said to be strongly connected if every node is reachable from every other node.

\subsection{Problem Setup}
At each time instant $k\geq0$ each node $v_j \in \mathcal{V}$ maintains a scalar state $x_j(k) \in \mathbb{R}$. For analysis purposes, we define the aggregate state of all nodes by $x(k)=\left( x_1(k), \ldots, x_n(k)\right)^{\top} \in \mathbb{R}^n$. 
The goal of the agents is to collaboratively solve the following discrete-time average consensus (DTAC) problem:
\begin{align}\label{eq:ac_problem}
    \text{Problem 1:} \quad \bar{x} := \dfrac{1}{n} \sum_{i=1}^{n} x_i(0)
\end{align}
where $\bar{x}$ denotes the network-wide average of all agents' initial values, $x_i(0)$. Clearly, in the absence of global knowledge, individual agents are required to execute an iterative distributed algorithm to eventually converge to the initial network-wide average, by updating their states using information received from their neighboring nodes.

\subsection{Average Consensus using Push-Pull Weights (PPAC)}
A linear algorithm for reaching average consensus over directed and strongly connected networks, has been proposed in \cite{cai2012average}. The main idea behind their proposed algorithm is to maintain a time-invariant state sum $\mathbf{1}^T x$, such that the agents do not lose the track of the initial average $x(0)$, which is the main difficulty when the network is asymmetric (modeled via digraph). To achieve this, at each iteration $k$, each agent $v_j \in \set{V}$ maintains a local state variable $x_j(k) \in \mathbb{R}$, and an auxiliary variable $s_j(k) \in \mathbb{R}$ (called \emph{surplus}). The surplus variable locally records the state changes of individual nodes such that $\mathbf{1}^{\top}(x(k)+s(k))=\mathbf{1}^{\top} x(0)$ for all time $k$, where $s(k)=\left( s_1(k), \ldots, s_n(k)\right)^{\top} \in \mathbb{R}^n$. Then, each agent $v_j$ iteratively updates its variables at each time step $k$ as
\begin{subequations}\label{eq:ppac}
\begin{align}
x_j(k+1) & = \gamma s_j(k) + \sum_{v_i \in \inneighbor{j} \cup \{j\}} r_{ji} x_i(k) , \\
s_j(k+1) & = x_j(k) - x_j(k+1) + \sum_{v_i \in \inneighbor{j} \cup \{j\}} c_{ji} s_i(k),
\end{align}
\end{subequations}
initialized at arbitrary $x_{j}(0) \in \mathbb{R}$, and $s_{j}(0)=0$. Each agent $v_j$ assigns the weights for the incoming information based on its in-degree as:
\begin{align}\label{eq:r-weights}
    r_{ji}=\begin{cases}
    \dfrac{1}{1+\indegree{j}},\!\!\!& \text{ if } v_i \in \inneighbor{j} \text{ or } j=i,\\
     0, &\text{ otherwise},
    \end{cases}
\end{align}
where the resulting weights $r_{ji}\geq0$ (often called \emph{``pull" weights}) are the $(j,i)$-th entries of the row-stochastic matrix $R=\{r_{ji}\} \in \mathbb{R}_{+}^{n \times n}$. The assignment of ``pull" weights is straightforward since each agent can easily obtain its in-degree by counting the incoming streams of information. 
Moreover, each agent $v_j$ assigns the weights for the outgoing information based on its out-degree as:
\begin{align}\label{eq:c-weights}
    c_{lj}=\begin{cases}
    \dfrac{1}{1+\outdegree{j}},\!\!\!& \text{ if } v_l \in \outneighbor{j} \text{ or } l=j,\\
     0, &\text{ otherwise}.
    \end{cases}
\end{align}
where the resulting weights $c_{lj}\geq0$ (often called \emph{``push" weights}) are the $(l,j)$-th entries of the column-stochastic matrix $C=\{c_{lj}\} \in \mathbb{R}_{+}^{n \times n}$. Note that, agents are required to have the knowledge of their out-degree to assign the weights $c_{lj}$, hence it is required to have either an estimate of the out-degree \cite{hadjicostis2015robust,charalambous2015distributed} or to compute the out-going streams of information by utilizing 1-bit feedback links \cite{makridis2023utilizing}.

The parameter $\gamma>0$ (often referred to as \emph{surplus gain}) denotes the gain by which the surplus is amplified, such that each agent regulates its convergence speed to the average consensus value. The selection of parameter $\gamma$ requires global knowledge of the network size. 

\begin{rem}
The weight matrices $R$ and $C$ that are formed by assigning the weights as in \eqref{eq:r-weights} and \eqref{eq:c-weights}, preserve row- and column-stochasticity, respectively. This implies that 
$R\mathbf{1} = \mathbf{1}$ and $\mathbf{1}^{\top} C = \mathbf{1}^{\top}$. Moreover, the fact that digraph $\mathcal{G}$ includes self-loops (each agent has access to its own variables), implies that $r_{ii}>0$ and $c_{ii}>0$, $\forall v_i \in \set{V}$. 
\end{rem}


\section{Algorithm Development}\label{sec:algorithm}

In this section we design a linear distributed strategy for each node $v_j \in \set{V}$ in a directed network $\set{G}=(\set{V},\set{E})$ to handle information that is received over delay-prone directional links, such that all the nodes converge to the average consensus value in \eqref{eq:ac_problem}. 
More specifically we assume that, the transmission on the link $\varepsilon_{ji}$ at time step $k$ experiences a delay, $\tau_{ji}(k)$, \ie the (discrete) time interval between the transmitting and receiving time steps. The delays on the transmission links are heterogeneous, time-varying, and bounded, \ie $0 \leq \tau_{ji}(k) \leq \bar{\tau}_{ji} \leq \bar{\tau} < \infty$, where $\bar{\tau}_{ji}$ denotes the maximum delay over the link $\varepsilon_{ji}$, and $\bar{\tau}$ the maximum delay of the network over all the links \ie $\bar{\tau} = \max_{\varepsilon_{ji} \in\set{E}} \{\bar{\tau}_{ji}\}$. Note that, each node $v_j$ has immediate access to its own local state $x_j$, and hence $\tau_{jj}(k)=0$ for all $v_j \in \set{V}$ and all $k\geq0$.

\subsection{Robustified Push-Pull Average Consensus (RPPAC)}
Inspired by the linear distributed algorithm in \cite{cai2012average} for reaching average consensus over reliable directed graphs, we introduce a robustified alternative by which nodes can reach exact average consensus by handling heterogeneous time-varying delays. Although the number of nodes in the network and the interconnecting links are considered fixed, the presence of time-varying and heterogeneous delays affect the way that each node should assign the consensus weights such that the sum of the nodes' values (\ie \emph{mass}) of the network is preserved at each time step. Hence, to preserve the mass of the network fixed, we devise a new linear algorithm (hereinafter called RPPAC) based on the surplus consensus in \cite{cai2012average}, where each node updates its information state (at each iteration) via a linear combination of the (possibly delayed) information state received from its neighbors at that iteration. This algorithm converges to the exact average of the nodes’ initial values, despite the presence of arbitrary, yet bounded time-delays.

In particular, at each iteration $k$, each agent $v_j \in \set{V}$ maintains a local state variable $x_j(k) \in \mathbb{R}$, and an auxiliary surplus variable $s_j(k) \in \mathbb{R}$, that is used to preserve the total mass in the network constant at each time step (\ie $\mathbf{1}^{\top}(\vect{x}(k)+\vect{s}(k)) = \mathbf{1}^{\top} \vect{x}(0),~\forall k\geq0$).  First, each node $v_j$ sets $x_j(0)=V_j$, $s_j(0)=0$, and $0<\gamma<1$. Prior the iterative phase of the RPPAC algorithm, it broadcasts dummy packets and receives an acknowledgment feedback signal from each in-neighbor to acquire its in-degree and out-degree (see \cite{makridis2023utilizing} for more details on the acquisition of out-degree). 
Following, at each iteration $k$, each node $v_j$ performs the following steps:

\noindent\textbf{Broadcasting:}
It broadcasts its own (unweighted) state variable, $x_j(k)$, and a weighted version of its surplus variable, $c_{lj} s_j(k)$, to its out-neighbors $v_l \in \outneighbor{j}$, over possibly delay-prone links $\varepsilon_{lj} \in \set{E}$. The ``push" consensus weights $c_{lj}$ for $l=1,\ldots,n$ can be assigned offline by each node $v_i$, given that the network is fixed, using the ``push" weight assignment strategy in \eqref{eq:c-weights}. This ensures that at each time step $k$ the total mass of the surplus variable is fully (and equally) distributed to the out-neighbors of $v_j$, since $\sum_{v_l \in \outneighbor{j}} c_{lj} = 1$. Notice that, node $v_j$ transmits its local variables without considering that the information sent over its outgoing links $\varepsilon_{lj}$ for any $v_l \in \outneighbor{j}$ might experience a delay. 

\noindent\textbf{Receiving:}
It receives the weighted surplus variables $c_{ji} s_i(k-\delta)$ and the state variables $x_i(k-\delta)$ for all $0 \leq \delta \leq \bar{\tau}_{ji}$ from (possibly some of) its in-neighbors $v_i \in \inneighbor{j}$ that arrived over possibly delay-prone links $\varepsilon_{ji} \in \set{E}$. Upon the arrival of these variables, it scales each received state variable $x_i(k-\delta)$ by the ``pull" consensus weights $r_{ji}(k)$ as $r_{ji}(k)x_i(k)$, where the weight $r_{ji}(k)$ is assigned by each node $v_j$ depending on the possibly delayed packets arrived exactly at time step $k$. The assignment of ``pull" weights is elaborated later in $\S$\ref{sec:convergence_analysis}.


\noindent\textbf{Updating:}
Upon the reception of (possibly delayed) information from the in-neighbors of $v_j$, it updates its own state variable $x_j(k+1)$ and its auxiliary surplus variable $s_j(k+1)$, as follows: 
\begin{subequations}\label{eq:rppac}
\begin{align}
    x_j(k+1) &= \gamma s_j(k) + \!\!\! \underbrace{\sum_{v_i \in \inneighbor{j}\cup \{j\}} \sum_{\delta=0}^{\bar{\tau}_{ji}} r_{ji}(k) x_i(k-\delta) \ell_{ji}(k-\delta)}_{\text{possibly delayed incoming states}},\label{eq:rppac_x}\\
    s_j(k+1) &= g_j(k+1) + \!\!\! \underbrace{\sum_{v_i \in \inneighbor{j}\cup \{j\}} \sum_{\delta=0}^{\bar{\tau}_{ji}} c_{ji} s_i(k-\delta) \ell_{ji}(k-\delta)}_{\text{possibly delayed incoming surplus}},\label{eq:rppac_s}
\end{align}
\end{subequations}
where $g_j(k+1)\triangleq x_j(k)-x_j(k+1)$ and $\ell_{ji}(k-\delta)$ captures the delay on link $\varepsilon_{ji}$ at iteration $k$ as:
\begin{align}
    \ell_{ji}(k-\delta) = \begin{cases}
        1, & \text{ if } \tau_{ji}(k-\delta) = \delta,\\
        0, & \text{ otherwise}.       
    \end{cases}
\end{align}
The parameter $\gamma>0$ is a sufficiently small number depending on the network size and topology, and the length of delays.

\section{Convergence Analysis}\label{sec:convergence_analysis}

In this section, we analyse the convergence of RPPAC, by introducing its vector-matrix augmented form that corresponds to an augmented digraph that models (possibly) time-varying delayed information. To simplify the analysis, we consider that the maximum delay at each link is identical to the maximum delay in the network, \ie $\bar{\tau}_{ji}=\bar{\tau}$. To model all the possible delayed transmissions consider the following augmentation on the original graph $\set{G}$. For each agent $v_j \in \set{V}$, we add $\bar{\tau}$ extra virtual nodes that represent local buffers which propagate the delayed information to its destined agent after $0<\delta\leq\bar{\tau}$ iterations. Hence the total number of nodes in the augmented digraph $\set{G}^{\alpha}=\{\set{V}^{\alpha},\set{E}^{\alpha}\}$ is $\tilde{n} = n(\bar{\tau}+1)$, where the actual agents are indexed by $1, \ldots, n$ and the virtual nodes by $n+1,\ldots,\tilde{n}$. Thus, the virtual nodes $n+1,\ldots,2n$ model the information delayed by $\delta=1$ time step, $2n+1,\ldots,3n$ model the information delayed by $\delta=2$ time steps, and so on. 

Based on the augmented digraph model, we further define the augmented variables $\tilde{\vect{x}}$ and $\tilde{\vect{s}}$ that hold the (possibly delayed) information in the (virtual buffer) nodes as
\begin{align}
\tilde{\vect{x}}(k) &= \big( x^{\top}(k),~x^{(1)}(k),\ldots, x^{(\bar{\tau})}(k) \big)^{\top},\\
\tilde{\vect{s}}(k) &= \big( s^{\top}(k),~s^{(1)}(k),\ldots, s^{(\bar{\tau})}(k) \big)^{\top},    
\end{align}
where $x^{(\delta)}(k)= \big( x^{(\delta)}_1(k), \ldots, x^{(\delta)}_n(k) \big)$, and  $s^{(\delta)}(k)= \big( s^{(\delta)}_1(k), \ldots, s^{(\delta)}_n(k) \big)$. Then we can rewrite the update phase of RPPAC in its vector-matrix form as:
\begin{subequations}\label{eq:rppac_vector_matrix}
\begin{align}
\tilde{\vect{x}}(k+1) & = \tilde{R}(k)\tilde{\vect{x}}(k) + H \tilde{\vect{s}}(k),\\
\tilde{\vect{s}}(k+1) & = J(k)\tilde{\vect{x}}(k) + (\tilde{C}(k)-H)\tilde{\vect{s}}(k),
\end{align}
\end{subequations}
where 
\begin{align}
\small\def\arraystretch{1.2}
\setlength\arraycolsep{1.4pt}
\tilde{R}(k) &\triangleq \left(\begin{array}{cccc}
R^{(0)}(k) & R^{(1)}(k) & \cdots & R^{(\bar{\tau})}(k)\\
I & 0 & \cdots & 0\\
0 & I & \cdots & 0\\
\vdots & \vdots & \ddots& \vdots\\
0 & 0 & \cdots & 0\\
\end{array}\right) \in \mathbb{R}_+^{\tilde{n}\times \tilde{n}},\nonumber\\
H &\triangleq \left(\begin{array}{cccc}
\gamma I & 0 & \cdots & 0 \\
0 & 0 & \cdots & 0\\
\vdots & \vdots & \ddots & \vdots\\
0 & 0 & \cdots & 0\\
\end{array}\right) \in \mathbb{R}_+^{\tilde{n}\times \tilde{n}},\nonumber\\
J(k) &\triangleq \arraycolsep=1.4pt\def\arraystretch{1.2}\left(\begin{array}{cccc}
I-R^{(0)}(k) & -R^{(1)}(k) & \cdots & -R^{(\bar{\tau})}(k) \\
0 & 0 & \cdots & 0\\
\vdots & \vdots & \ddots & \vdots\\
0 & 0 & \cdots & 0\\
\end{array}\right) \in \mathbb{R}^{\tilde{n}\times \tilde{n}},\nonumber\\
\tilde{C}(k) &\triangleq \left(\begin{array}{cccc}
C^{(0)}(k) & I & \cdots & 0 \\
C^{(1)}(k) & 0 & \cdots & 0\\
\vdots & \vdots & \ddots & I\\
C^{(\bar{\tau})}(k) & 0 & \cdots & 0\\
\end{array}\right) \in \mathbb{R}_+^{\tilde{n}\times \tilde{n}}.\label{eq:system_matrices}
\end{align}

The element at the $j$-th row and $i$-th column of $R^{(\delta)}(k) \in \mathbb{R}_{+}^{n \times n}$ and $C^{(\delta)}(k) \in \mathbb{R}_{+}^{n \times n}$, for $\delta=0,1,\ldots,\bar{\tau}$, are determined by:
\begin{align}\label{eq:r_weighted_delayed_link}
r^{(\delta)}_{ji}(k) = \begin{cases}
    \frac{1}{1+|\set{N}_j^{\alpha,\texttt{in}}(k)|}, & \text { if } \tau_{ji}(k-\delta)=\delta, \; \varepsilon_{ji} \in \mathcal{E}, \\
    0,      & \text { otherwise, }\end{cases}
\end{align}
where $|\set{N}_j^{\alpha,\texttt{in}}(k)|$ is the virtual in-degree of the actual node $v_j$ which denotes the number of (possibly delayed) incoming streams of information that arrived at node $v_j$ exactly at time step $k$; and
\begin{align}\label{eq:c_weighted_delayed_link}
c^{(\delta)}_{ji}(k) = \begin{cases}
    \frac{1}{1 + |\set{N}_i^{\texttt{out}|}}, & \text { if } \tau_{ji}(k-\delta)=\delta, \; \varepsilon_{ji} \in \mathcal{E}, \\
    0,      & \text { otherwise. }\end{cases}    
\end{align}

By appending the vectors $\tilde{\vect{x}}$ and $\tilde{\vect{s}}$ of \eqref{eq:rppac_vector_matrix} in a new augmented state, we get the following matrix form representation:
\begin{align}\label{eq:augmented_rppac}
\begin{bmatrix}
    \tilde{\vect{x}}(k+1)\\
    \tilde{\vect{s}}(k+1)
\end{bmatrix}& = M(k) \begin{bmatrix}
    \tilde{\vect{x}}(k)\\
    \tilde{\vect{s}}(k)
\end{bmatrix}.
\end{align}
where
\begin{align}
M(k) \triangleq \begin{bmatrix}
    \tilde{R}(k) & H\\
    J(k) & \tilde{C}(k)-H
\end{bmatrix}
\end{align}

The following theorem states the conditions on which the algorithm in \eqref{eq:augmented_rppac} achieves average consensus. 

\begin{theorem}
    The algorithm in \eqref{eq:augmented_rppac} achieves asymptotic average consensus with the parameter $\gamma>0$ sufficiently small, if and only if the digraph $\set{G}$ is strongly connected, and the transmission delays are bounded, $\tau_{ji}(k) \leq \bar{\tau}_{ji} \leq \bar{\tau}<\infty$ for all $j,i \in \set{V}$.
\end{theorem}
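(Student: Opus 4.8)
The plan is to read \eqref{eq:augmented_rppac} as a linear time‑varying recursion $z(k+1)=M(k)z(k)$ with $z(k)=(\tilde{\vect{x}}(k)^{\top},\tilde{\vect{s}}(k)^{\top})^{\top}$, and to show that the backward product $\Phi(k)=M(k-1)\cdots M(0)$ converges to a fixed rank‑one projector whose range is the average‑consensus line. Necessity is the easy direction: if $\set{G}$ is not strongly connected some node's initial value cannot propagate, so the global average is unreachable, and if the delays are unbounded the finite augmentation $\set{G}^{\alpha}$ is ill‑defined and the conserved mass leaks away. For sufficiency I would first record two structural invariants holding for \emph{every} admissible delay realisation. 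A direct block computation yields a common left eigenvector $w^{\top}=(\mathbf{1}_n^{\top},0,\ldots,0,\mathbf{1}_{\tilde{n}}^{\top})$ with $w^{\top}M(k)=w^{\top}$; using the self‑loop weights, the column‑stochasticity $\mathbf{1}^{\top}\tilde{C}(k)=\mathbf{1}^{\top}$ from \eqref{eq:c_weighted_delayed_link}, and the identity $\sum_{\delta}R^{(\delta)}(k)\mathbf{1}=\mathbf{1}$ from \eqref{eq:r_weighted_delayed_link}, this gives the conserved quantity $\mathbf{1}_n^{\top}\vect{x}(k)+\mathbf{1}_{\tilde{n}}^{\top}\tilde{\vect{s}}(k)=\mathbf{1}_n^{\top}\vect{x}(0)=n\bar{x}$ once $\vect{s}(0)=0$ and the buffers start at zero. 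Symmetrically, $\xi=(\mathbf{1}_{\tilde{n}}^{\top},0)^{\top}$ is a common right eigenvector, $M(k)\xi=\xi$, encoding the consensus direction $\vect{x}=\bar{x}\mathbf{1}_n$, $\vect{s}=0$; since $w^{\top}\xi=n$, the only consensus value consistent with the invariant is $\bar{x}=w^{\top}z(0)/(w^{\top}\xi)$.

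Next I would use these eigenvectors to split the dynamics. With $P=\xi w^{\top}/(w^{\top}\xi)$, the identities $M(k)\xi=\xi$ and $w^{\top}M(k)=w^{\top}$ give $PM(k)=M(k)P=P$, so each factor decomposes as $M(k)=P+N(k)$ with $N(k)=(I-P)M(k)(I-P)$ acting on the invariant complement $W=\ker w^{\top}$. Because $PN(k)=N(k)P=0$, the product telescopes to $\Phi(k)=P+\prod_{t=0}^{k-1}N(t)$, so average consensus is equivalent to $\prod_{t}N(t)\to0$ on $W$. The decisive simplification is finiteness: with the graph fixed and the delays bounded by $\bar{\tau}$, each $M(k)$ is selected by one of finitely many delay patterns, so $\{M(k)\}\subseteq\set{M}=\{M_1,\ldots,M_p\}$ and $\{N(k)\}$ ranges over a finite set of operators on $W$. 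Convergence of all backward products then reduces to showing that this finite family is jointly contractive, $\rho(\{N_i|_W\})<1$.

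The per‑snapshot spectral bound I would obtain by a perturbation argument in the surplus gain. At $\gamma=0$ each snapshot is block lower‑triangular, $M(k)|_{\gamma=0}=\left(\begin{smallmatrix}\tilde{R}(k)&0\\ J(k)&\tilde{C}(k)\end{smallmatrix}\right)$, so its unimodular modes on $W$ come solely from the column‑stochastic surplus block $\tilde{C}(k)$, the genuine state‑consensus mode $\xi$ being annihilated by $I-P$. Strong connectivity of $\set{G}$ renders $\set{G}^{\alpha}$ strongly connected, so each $\tilde{R}(k)$ and $\tilde{C}(k)$ is primitive with a simple eigenvalue $1$ and all other eigenvalues strictly inside the unit disc; hence on $W$ every snapshot carries a single unimodular mode at $\gamma=0$. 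The couplings $H$ and $J(k)$, of order $\gamma$, perturb this mode, and a first‑order computation shows the surplus gain drives it strictly inside the unit circle, giving $\rho(N_i|_W)<1$ for each $M_i\in\set{M}$ and all sufficiently small $\gamma>0$. If this contraction can be made \emph{joint} over $\set{M}$, then $\prod_{t}N(t)\to0$ and $z(k)\to Pz(0)=\bar{x}\,\xi$, \ie $\vect{x}(k)\to\bar{x}\mathbf{1}_n$.

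The hard part will be exactly this upgrade from per‑snapshot spectral radii to a joint contraction. Unlike the delay‑free surplus‑consensus setting, the delay blocks $-R^{(\delta)}(k)$ in $J(k)$ make $M(k)$ \emph{not} nonnegative, so Perron–Frobenius and inhomogeneous‑Markov ergodicity arguments do not apply factorwise; and $\rho(N_i|_W)<1$ for each $i$ does not by itself force $\prod_t N(t)\to0$, since the joint spectral radius of a finite family can exceed the individual spectral radii. The technical heart is therefore to produce a common submultiplicative metric $\lVert\cdot\rVert_{*}$ on $W$ with $\lVert N_i\rVert_{*}\le\rho<1$ simultaneously for all $M_i\in\set{M}$ and all small $\gamma$, equivalently to establish weak ergodicity of the backward products $N(k-1)\cdots N(0)$ directly from the strong connectivity of $\set{G}^{\alpha}$, which is what closes the proof.
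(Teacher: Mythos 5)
Your structural groundwork is correct, and it is a genuinely cleaner bookkeeping than the paper's: the common left eigenvector $w^{\top}=(\mathbf{1}_n^{\top},0,\ldots,0,\mathbf{1}_{\tilde{n}}^{\top})$, the right eigenvector $\xi=(\mathbf{1}_{\tilde{n}}^{\top},0)^{\top}$, the projector $P=\xi w^{\top}/n$ satisfying $PM(k)=M(k)P=P$, and the telescoping $\Phi(k)=P+\prod_{t}N(t)$ all check out, and they correctly reduce the theorem to $\prod_t N(t)\to 0$ on $W=\ker w^{\top}$ (the paper instead splits $M(k)=M_0(k)+M_1$ with $M_1^2=\mathbf{0}$ and expands the backward products). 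The genuine gap comes next: your claim that strong connectivity of $\set{G}$ makes \emph{each snapshot} $\tilde{R}(k)$, $\tilde{C}(k)$ primitive with a simple unit eigenvalue, so that a first-order perturbation in $\gamma$ yields $\rho(N_i|_W)<1$ for every $M_i\in\set{M}$, is false whenever $\bar{\tau}\geq 1$. A single snapshot of the augmented digraph is essentially never strongly connected; the paper only asserts \emph{joint} strong connectivity over windows of $\bar{\tau}+1$ steps. Concretely, take any admissible step $k$ at which no packets are delivered (all in flight): then $R^{(0)}(k)=I$ and $R^{(\delta)}(k)=0$ for $\delta\geq 1$, hence $J(k)=0$, and every vector of the form $(v,v,\ldots,v,0)$ with $v\in\mathbb{R}^n$ satisfies $M(k)z=z$; choosing $v\perp\mathbf{1}$ (possible since $n>1$) gives $z\in W$ with $N(k)z=z$, so $\rho(N(k)|_W)=1$ for \emph{every} $\gamma>0$. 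No smallness of $\gamma$ can contract such a snapshot, so the per-snapshot spectral step on which your finite-family reduction rests is unavailable, and — as you yourself concede — the upgrade to a joint contraction (a common norm, or weak ergodicity of the backward products) is exactly what you do not supply. As it stands the proposal is a program, not a proof.

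The missing idea, which is how the paper proceeds, is to abandon per-snapshot analysis and work with $\beta$-length windowed words with $\beta\geq\bar{\tau}+1$: over such a window the augmented digraph is jointly strongly connected \cite{hadjicostis2013average}, every unit of buffered surplus is delivered to a real node at least once, and since each of the first $n$ columns of $\tilde{C}(k)-H$ sums to $1-\gamma$ while buffer columns sum to $1$, the windowed surplus product $\bar{E}_{(\beta)}$ contracts uniformly for $0<\gamma<\underline{c}$ (which also keeps $\tilde{C}(k)-H$ nonnegative) — a direct $\|\cdot\|_1$-type bound that delivers the joint contraction without any joint-spectral-radius machinery; simultaneously $\bar{R}_{(\beta)}$ is a row-stochastic matrix whose graph is strongly connected, giving ergodicity of the state chain at window level. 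Note for fairness that the paper's own argument is also only a sketch: the cross-term analysis showing that the $M_1$-coupling steers the limit to the \emph{exact} average is declared tedious and omitted there, and your proposal does not reach that point either — it stops one step earlier, at a contraction claim that provably fails at the snapshot level. Your necessity discussion is likewise heuristic (the paper proves nothing in that direction), but the decisive defect is the sufficiency argument just described.
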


\begin{proof}
A sketch of the proof is provided in the Appendix.    
\end{proof}

\section{Simulation Results}\label{sec:simulation_results}
Consider a delay-prone directed network, $\set{G}$, shown in Fig.~\ref{fig:graph_numerical_example} comprised of ten agents ($n=10$), with each agent $v_j$ executing the RPPAC algorithm as described in \ref{sec:algorithm}. In this example, the agents' initial values are set at their unique identification index, (\ie $x_j(0)=j$ for $j=1, \ldots, n$), while the auxiliary variables are initialized at $s_j(0)=0$ for all $v_j \in \set{V}$. Based on this configuration, the average of the agents' initial values is $\bar{x}=5.5$. At each time step $k$ the packets transmitted over the network's communication links are possibly and uniformly experiencing a time delay $\tau_{ji}\leq \bar{\tau}$ for all $\varepsilon_{ji} \in \set{E}$.

\begin{figure}[H]
    \centering
    \begin{tikzpicture}[>=Latex, thick,shorten >=2pt, shorten <=2pt]
\tikzset{node distance = 1cm and 1cm}

\begin{scope}[every node/.style={font=\footnotesize,circle,thick,draw},
              main/.style={draw=black,thick, draw, circle, fill=black!10, minimum size=0.8cm},]
\node[main] (1) at (5.5,0) {$v_1$}; 
\node[main] (2) at (4,0) {$v_2$}; 
\node[main] (3) at (2.5,0) {$v_3$}; 
\node[main] (4) at (6.5,-1.2) {$v_4$};
\node[main] (5) at (5,-1.2) {$v_5$}; 
\node[main] (6) at (3,-1.2) {$v_6$}; 
\node[main] (7) at (1.5,-1.2) {$v_7$}; 
\node[main] (8) at (5.5,-2.4) {$v_8$}; 
\node[main] (9) at (4,-2.4) {$v_9$}; 
\node[main] (10) at (2.5,-2.4) {$v_{10}$}; 
\end{scope}

\begin{scope}[>={Latex[black]},
              every node/.style={inner sep=0.04cm,font=\scriptsize,fill=white,circle,minimum size=0.1},
              every edge/.style={draw=black, thick, ->,> = latex'}]
\path[->](1) edge[bend left=13] (2);
\path[->](1) edge[bend left=13] (4);
\path[->](2) edge[bend left=13] (1);
\path[->](3) edge[bend left=13] (2);
\path[->](3) edge[bend left=13] (7);
\path[->](4) edge[bend left=13] (5);
\path[->](4) edge[bend left=13] (8);
\path[->](5) edge[bend left=13] (4);
\path[->](5) edge[bend left=13] (6);
\path[->](6) edge[bend left=13] (7);
\path[->](7) edge[bend left=13] (6);
\path[->](7) edge[bend left=13] (3);
\path[->](8) edge[bend left=13] (4);
\path[->](8) edge[bend left=13] (9);
\path[->](9) edge[bend left=13] (10);
\path[->](10) edge[bend left=13] (7);
\path[->](10) edge[bend left=13] (9);

\end{scope}
\end{tikzpicture}
    \caption{Delay-prone digraph $\set{G}$ comprised of 10 agents. Self-loops are allowed but not shown for ease of presentation.}
    \label{fig:graph_numerical_example}
\end{figure}

As a first step towards validating our theoretical results, we run the RPPAC algorithm over different directional networks with heterogeneous time-varying delays bounded by $\bar{\tau}=\{0,2,5\}$, with a fixed surplus gain $\gamma=0.1$. The state variables $x_i$ at each agent $v_i$ converge to the average consensus value $\bar{x}=5.5$, as shown in Fig.~\ref{fig:iterations_x}, while the surplus variables $s_j$, are driven to $0$, as shown in Fig.~\ref{fig:iterations_s}. It is worth mentioning that, the agents successfully converge to the average consensus value, although the surplus gain $\gamma$ is not carefully chosen based on the network topology and size, as well as the length of delays, but it is rather chosen to be relatively small and same for all considered $\bar{\tau}$.

\begin{figure}[H]
     \centering
     \begin{tikzpicture}
  \begin{axis}[
  scaled y ticks=false,
    width=8.5cm,height=3.5cm,
    grid=major,
	tick label style={font=\small},
	label style={font=\small},
	ylabel={state $x_j(k)$},
	xlabel={iteration ($k$)},
    ymin=, ymax=11,
	xmin=-1,xmax=300,
	xtick={0,50,100,150,200,250,300},
    every axis plot/.append style={thick},
    max space between ticks=20,
    legend cell align={left}, 
    legend pos = south east,
    legend columns=1,
    legend style={font=\scriptsize,column sep=1ex},
]

\pgfplotsinvokeforeach{1,...,9} {
   \addplot[color=Dandelion,opacity={0.1*#1},forget plot] table [x=k,y=x#1, col sep=comma]{results/iterations_delay5.txt};
}      
\addplot[color=Dandelion] table [x=k,y=x10, col sep=comma]{results/iterations_delay5.txt};
\addlegendentry{$\bar{\tau}=5$} 

\pgfplotsinvokeforeach{1,...,9} {
   \addplot[color=Maroon,opacity={0.1*#1},forget plot] table [x=k,y=x#1, col sep=comma]{results/iterations_delay2.txt};
}
\addplot[color=Maroon] table [x=k,y=x10, col sep=comma]{results/iterations_delay0.txt};
\addlegendentry{$\bar{\tau}=2$} 

\pgfplotsinvokeforeach{1,...,9} {
    \addplot[color=NavyBlue,opacity={0.1*#1},forget plot] table [x=k,y=x#1, col sep=comma]{results/iterations_delay0.txt};
}     
\addplot[color=NavyBlue] table [x=k,y=x10, col sep=comma]{results/iterations_delay0.txt};
\addlegendentry{$\bar{\tau}=0$} 

\addplot [ultra thick,black,dashed,domain=0:300] coordinates {(0,5.5) (300,5.5)}; \addlegendentry{$\bar{x}=5.5$}

\end{axis}
\end{tikzpicture}
     \vspace{-5pt}
     \caption{State variable $x_j(k)$ at each agent\vspace{-5pt}.}
     \label{fig:iterations_x}
\end{figure}

\begin{figure}[H]
     \centering
     \begin{tikzpicture}
  \begin{axis}[
  scaled y ticks=false,
    width=8.5cm,height=3.5cm,
    grid=major,
	tick label style={font=\small},
	label style={font=\small},
	ylabel={surplus $s_j(k)$},
	xlabel={iteration ($k$)},
    ymin=-4, ymax=8,
	xmin=-1,xmax=300,
	xtick={0,50,100,150,200,250,300},
    every axis plot/.append style={thick},
    max space between ticks=20,
    legend cell align={left}, 
    legend pos = north east,
    legend columns=1,
    legend style={font=\scriptsize,column sep=1ex},
]

\pgfplotsinvokeforeach{1,...,9} {
   \addplot[color=Dandelion,opacity={0.1*#1},forget plot] table [x=k,y=s#1, col sep=comma]{results/iterations_delay5.txt};
}      
\addplot[color=Dandelion] table [x=k,y=s10, col sep=comma]{results/iterations_delay5.txt};
\addlegendentry{$\bar{\tau}=5$}  

\pgfplotsinvokeforeach{1,...,9} {
   \addplot[color=Maroon,opacity={0.1*#1},forget plot] table [x=k,y=s#1, col sep=comma]{results/iterations_delay2.txt};
}
\addplot[color=Maroon] table [x=k,y=s10, col sep=comma]{results/iterations_delay0.txt};
\addlegendentry{$\bar{\tau}=2$}

\pgfplotsinvokeforeach{1,...,9} {
    \addplot[color=NavyBlue,opacity={0.1*#1},forget plot] table [x=k,y=s#1, col sep=comma]{results/iterations_delay0.txt};
}     
\addplot[color=NavyBlue] table [x=k,y=s10, col sep=comma]{results/iterations_delay0.txt};
\addlegendentry{$\bar{\tau}=0$} 

\end{axis}
\end{tikzpicture}
    \vspace{-5pt}
     \caption{Surplus variable $s_j(k)$ at each agent\vspace{-5pt}.}
     \label{fig:iterations_s}
\end{figure}

In Fig.~\ref{fig:convergence_rate}, we present the mean consensus error $\frac{1}{n} \vect{e}^{\top}(k) \vect{e}(k)$ where $\vect{e}(k)\triangleq\vect{x}(k) - \mathbf{1} \bar{x}$, achieved by executing the distributed RPPAC algorithm, after averaging over 100 Monte Carlo simulations for three different upper bounds on the delays, \ie $\bar{\tau}=\{0,2,5\}$ and $\gamma=0.1$. The mean square consensus error for the delay-free network is driven to 0 faster than the delay-prone networks, with the same surplus gain parameter $\gamma$.  

\begin{figure}[h]
     \centering
     \begin{tikzpicture}
  \begin{semilogyaxis}[log origin=infty,
  scaled y ticks=false,
    width=8.5cm,height=3.5cm,
    grid=major,
	tick label style={font=\small},
	label style={font=\small},
	ylabel={mean consensus error},
	xlabel={iteration ($k$)},
    ymin=1e-6, ymax=10,
	xmin=0,xmax=300,
	xtick={0,50,100,150,200,250,300},
    every axis plot/.append style={thick},
    max space between ticks=5,
    legend cell align={left}, 
    legend pos = north east,
    legend columns=1,
    legend style={font=\scriptsize,column sep=1ex},
]

\addplot [thick,Dandelion,mark=diamond*,mark repeat=10] table [x=k,y=mean_d5, col sep=comma]{results/convergence_rate_delays.txt};\addlegendentry{$\bar{\tau}=5$}
\addplot [thick,Maroon,mark=triangle*,mark repeat=10] table [x=k,y=mean_d2, col sep=comma]{results/convergence_rate_delays.txt};\addlegendentry{$\bar{\tau}=2$}
\addplot [thick,NavyBlue,mark=square*,mark size=1.4,mark repeat=10] table [x=k,y=mean_d0, col sep=comma]{results/convergence_rate_delays.txt};\addlegendentry{$\bar{\tau}=0$}

\end{semilogyaxis}
\end{tikzpicture}
     \vspace{-15pt}
     \caption{Mean square consensus error for different length of delays\vspace{-5pt}.}
     \label{fig:convergence_rate}
\end{figure}
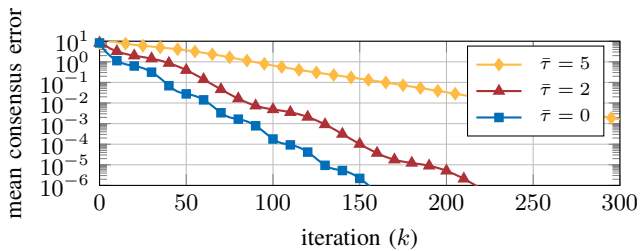
 
In Fig.~\ref{fig:spectral_gap_fixed_maxDelay} we present the spectral gap\footnote{Spectral gap is the difference between the moduli of the two largest eigenvalues of a matrix, \ie $|\lambda_1|-|\lambda_2|$ where $\lambda_i$ is the $i$-th eigenvalue of a matrix $A\in\mathbb{R}^{n \times n}$, with $|\lambda_1|\geq|\lambda_2|\geq\ldots\geq|\lambda_n|.$} of $M(k)$ for different upper bounds on the delays, \ie $\bar{\tau}=\{0,2,5\}$. The higher the spectral gap, the faster the convergence of the RPPAC algorithm, as the second largest absolute eigenvalue of $M(k)$, $\lambda_2$, moves away from the spectral radius $\lambda_1=\rho(M(k))$. As depicted in Fig.~\ref{fig:spectral_gap_fixed_maxDelay}, when the length of delays on the links of the network is longer, the spectral gap of the corresponding matrix $M(k)$ becomes smaller, leading to slower convergence. This behavior is shown in Fig.~\ref{fig:spectral_gap_fixed_gamma} for different upper bounds on the delays, \ie $\bar{\tau}=\{0,1,\ldots,10\}$. However, for a given matrix $M(k)$ that corresponds to a particular snapshot of the interactions in the network, one can choose a surplus gain $\gamma$ that guarantees the fastest convergence to the average consensus value. 

\begin{figure}[h]
     \centering
     \begin{tikzpicture}
  \begin{axis}
  [ ymode=normal,
    xmode=normal,
    scaled ticks=false, 
    tick label style={/pgf/number format/fixed},
    width=8.5cm,height=3.5cm,
    grid=major,
	tick label style={font=\small},
	label style={font=\small},
	ylabel={spectral gap},
	xlabel={surplus gain ($\gamma$)},
    ymin=0, ymax=0.07,
	xmin=0.0,xmax=0.3,
	xtick={0,0.05,...,0.3},
    every axis plot/.append style={thick},
    legend cell align={left}, 
    legend pos = north east,
    legend columns=1,
    legend style={font=\scriptsize,column sep=1ex},
]

\addplot [thick,Dandelion,mark=diamond*] table [x=gamma,y=meanSpectralGap, col sep=comma]{results/fixed_maxDelay5_10nodes.txt};\addlegendentry{$\bar{\tau}=5$}
\addplot [thick,Maroon,mark=triangle*] table [x=gamma,y=meanSpectralGap, col sep=comma]{results/fixed_maxDelay2_10nodes.txt};\addlegendentry{$\bar{\tau}=2$}
\addplot [thick,NavyBlue,mark=square*,mark size=1.4] table [x=gamma,y=meanSpectralGap, col sep=comma]{results/fixed_maxDelay0_10nodes.txt};\addlegendentry{$\bar{\tau}=0$}
    
\end{axis}
\end{tikzpicture}
     \vspace{-15pt}
     \caption{Mean spectral gap of $M(k)$ that corresponds to different for different upper bounds on the delays $\bar{\tau}=\{0,2,5\}$\vspace{-8pt}.}
     \label{fig:spectral_gap_fixed_maxDelay}
\end{figure}
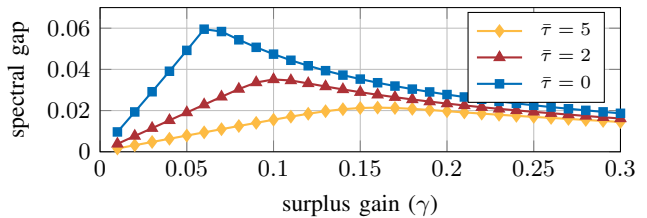

\begin{figure}[h]
     \centering
     \begin{tikzpicture}
  \begin{axis}
  [ ymode=normal,
    xmode=normal,
    scaled ticks=false, 
    tick label style={/pgf/number format/fixed},
    width=8.5cm,height=3.5cm,
    grid=major,
	tick label style={font=\small},
	label style={font=\small},
	ylabel={spectral gap},
	xlabel={maximum delay ($\bar{\tau}$)},
    ymin=0, ymax=0.06,
	xmin=0,xmax=10,
	xtick={0,...,10},
    every axis plot/.append style={thick},
    legend cell align={left}, 
    legend pos = north east,
    legend columns=1,
    legend style={font=\scriptsize,column sep=1ex},
]

\addplot [thick,black,mark=*] table [x=maxDelay,y=meanSpectralGap, col sep=comma]{results/fixed_gamma0.1_10nodes.txt};
    
\end{axis}
\end{tikzpicture}
     \vspace{-15pt}
     \caption{Mean spectral gap of $M(k)$ with $\gamma=0.1$\vspace{-8pt}.}
     \label{fig:spectral_gap_fixed_gamma}
\end{figure}
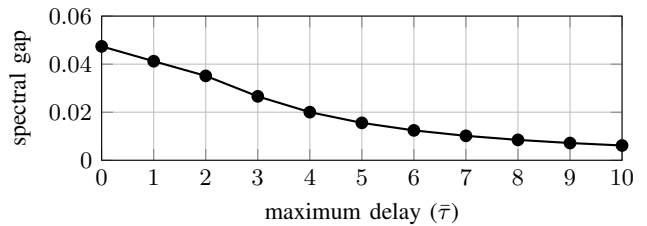

Following the discussion on the selection of the surplus gain $\gamma$, we run 100 Monte Carlo simulations for the network $\set{G}$ with an upper bound on the delays of $\bar{\tau}=2$, and different surplus gains $\gamma=\{0.01, 0.1, 0.3\}$. The convergence rate in terms of the mean consensus error for this example is shown in Fig.~\ref{fig:convergence_rate_gammas}. Notice that, the fastest convergence using the RPPAC algorithm for this particular configuration, is with $\gamma=0.1$, for which the spectral gap is maximized, as previously shown in Fig.~\ref{fig:spectral_gap_fixed_maxDelay}. 
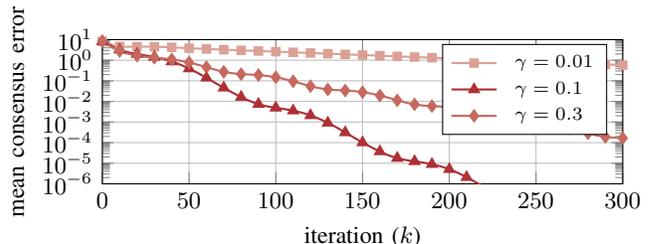
\begin{figure}[h]
     \centering
     \begin{tikzpicture}
  \begin{semilogyaxis}[log origin=infty,
  scaled y ticks=false,
    width=8.5cm,height=3.5cm,
    grid=major,
	tick label style={font=\small},
	label style={font=\small},
	ylabel={mean consensus error},
	xlabel={iteration ($k$)},
    ymin=1e-6, ymax=10,
	xmin=0,xmax=300,
	xtick={0,50,100,150,200,250,300},
    every axis plot/.append style={thick},
    max space between ticks=5,
    legend cell align={left}, 
    legend pos = north east,
    legend columns=1,
    legend style={font=\scriptsize,column sep=1ex},
]

\addplot [thick,Maroon!40,mark=square*,mark size=1.4,mark repeat=10] table [x=k,y=gamma0.01, col sep=comma]{results/convergence_rate_gammas.txt};\addlegendentry{$\gamma=0.01$}
\addplot [thick,Maroon,mark=triangle*,mark repeat=10] table [x=k,y=gamma0.1, col sep=comma]{results/convergence_rate_gammas.txt};\addlegendentry{$\gamma=0.1$}
\addplot [thick,Maroon!70,mark=diamond*,mark repeat=10] table [x=k,y=gamma0.3, col sep=comma]{results/convergence_rate_gammas.txt};\addlegendentry{$\gamma=0.3$}

\end{semilogyaxis}
\end{tikzpicture}
     \vspace{-20pt}
     \caption{Mean square consensus error with $\bar{\tau}=2$ and $\gamma=\{0.01,0.1,0.3\}$\vspace{-8pt}.}
     \label{fig:convergence_rate_gammas}
\end{figure}

\section{Conclusions and Future Directions}\label{sec:conclusions}

In this paper, we have tackled the discrete-time average consensus in multi-agent systems where the inter-agent communication is directional, potentially unbalanced, and delay-prone. We have introduced a linear distributed algorithm designed to accommodate asynchronous updates and cope with time-varying heterogeneous delays. Our proposed algorithm leverages knowledge about the number of incoming and outgoing links for each agent, thereby ensuring that state averaging is achieved even in the presence of an asymmetric network structure and information flow delays. 


RPPAC brings forward several interesting challenges. For example, what is the optimal choice of $\gamma$ for guaranteeing the fastest convergence? Additionally, it would be interesting to investigate how nodes could choose their own $\gamma$, \ie $\gamma_i$, guaranteeing that convergence is reached.
A promising direction for future research is to study the proposed asynchronous push-pull average consensus algorithm in the context of distributed optimization. 


\appendix
\noindent\textbf{Sketch of the Proof of Theorem 1:}\label{appendix:proofs}\\
We start by partitioning $M(k)$ into the summation of the terms $M_0(k)$ and $M_1$, 
\ie
\begin{align}\label{eq:m0+m1}
    M(k) = M_0(k) + M_1,
\end{align}
where
$$
M_0(k) := \begin{bmatrix}
    \tilde{R}(k) & 0\\
    J(k) & \tilde{C}(k)-H
\end{bmatrix}, \; \text{and } \;
M_1 := \begin{bmatrix}
    0 &  H\\
    0 & 0
\end{bmatrix}.
$$
$M_0 (k)$ is lower block triangular,
its eigenvalues are the union of eigenvalues of its block diagonal elements, that is, the matrices $\tilde{R}(k)$ and $\tilde{C}(k)-H$, \ie its spectrum is 
$$
\sigma(M_0(k)) = \sigma \big(\tilde{R}(k)\big) \; \cup \; \sigma \big( \tilde{C}(k)-H\big),
$$
where $\sigma(\cdot)$ denotes the spectrum, and $\tilde{R}(k)$ and $\tilde{C}(k)$ are the row- and column-stochastic matrices defined in \eqref{eq:system_matrices}, with spectral radius $\rho(\tilde{R}(k))=\rho(\tilde{C}(k))=1$.

Since $\set{G}$ is strongly connected, the corresponding augmented graph $\set{G}^{\alpha}$ that models the delayed information is jointly strongly connected after $\bar{\tau} + 1$ steps \cite{hadjicostis2013average}. 
Therefore, any $\beta-$length word $\bar{E}_{(\beta)}=(\tilde{C}(k+\beta)-H)(\tilde{C}(k+\beta-1)-H)\ldots (\tilde{C}(k+1)-H)$ and $\bar{R}_{(\beta)}=\tilde{R}(k+\beta) \tilde{R}(k+\beta-1) \ldots \tilde{R}(k+1)$, for any integer $\beta\geq\bar{\tau}+1$, gives graphs that are strongly connected.  

We start by defining the augmented state by concatenating the state and surplus variables of all the augmented nodes as $\chi(\cdot) = \begin{bmatrix}
    \tilde{\vect{x}}^{\top}(\cdot),~\tilde{\vect{s}}^{\top}(\cdot)
\end{bmatrix}^{\top}$.
First, consider the state and surplus variables update using the RPPAC in \eqref{eq:augmented_rppac} initialized with $\chi(k_0)$, at iteration $k_2$:
\begin{align}
\chi(k_2) = M(k_1) \chi(k_1)= M(k_1) M(k_0) \chi(k_0).\nonumber
\end{align}
We examine the backward product of matrices $M(k_1) M(k_0)$. From \eqref{eq:m0+m1},
\begin{align}\label{eq:backward-product2}
M&(k_1) M(k_0) =(M_0(k_1)+M_1)(M_0(k_0)+M_1) \nonumber \\
&=M_0(k_1)M_0(k_0) + M_0(k_0) M_1 + M_1 M_0(k_0) + M_1^2\nonumber \\
&\stackrel{(a)}{=} M_0(k_1)M_0(k_0) + M_0(k_0) M_1 + M_1 M_0(k_0), 
\end{align}
where $(a)$ stems from the fact that $M_1^2=\mathbf{0}_{\tilde{n}\times\tilde{n}}$ (comes directly from the definition of $M_1$ in \eqref{eq:m0+m1}).
Next, consider the corresponding backward product of matrices $M(k_2)M(k_1) M(k_0)$, \ie 
\begin{align}\label{eq:backward-product3}
M&(k_2)M(k_1) M(k_0) \nonumber\\
&=(M_0(k_2)+M_1) \big[ M_0(k_1)M_0(k_0) \nonumber\\
& \qquad + M_0(k_0) M_1 + M_1 M_0(k_0) \big]\nonumber\\
&\stackrel{(b)}{=} M_0(k_2)M_0(k_1)M_0(k_0) + M_0(k_2) M_0(k_0) M_1 +\nonumber\\
              & \quad \quad M_0(k_2) M_1 M_0(k_0) + M_1 M_0(k_1)M_0(k_0) +\nonumber\\
              & \quad \quad M_1 M_0(k_0) M_1, 
\end{align}
where $(b)$ stems again from the fact that $M_1^2=\mathbf{0}_{\tilde{n}\times\tilde{n}}$. Continuing in the same way, we can see that a lot of terms that have $M_1$ on the left-side of the product will be cancelled out. 
%
%
Additionally, we can deduce the following properties that hold for any integer $\beta\geq\bar{\tau}+1$, for the $\beta-$length word $\bar{M}_{(\beta)}=M(k+\beta)M(k+\beta-1)\ldots M(k+1)$.
\begin{itemize}
    \item Since the product of lower triangular matrices results in a lower triangular matrix, then the product $\bar{M}_{0,(\beta)} \triangleq M_{0}(k+\beta) M_{0}(k+\beta-1)\ldots M_0(k+1)$ results in a lower-triangular matrix of the form:
        $$
        \bar{M}_{0,(\beta)} = \begin{bmatrix}
            \bar{R}_{(\beta)} & \mathbf{0}\\
            \ast & \bar{E}_{(\beta)}
        \end{bmatrix},
        $$
        where $\rho(\bar{R}_{(\beta)})=1$ which is known from the product of row-stochastic matrices. Then, $\gamma$ should be chosen such that $\rho(\bar{E}_{(\beta)})<1$ \cite{ahmadi2012joint}. This can be secured for $0<\gamma<\underline{c}$, where $\underline{c}\triangleq \min\{C\}$ is the minimum consensus weight of the original column-stochastic matrix $C$ as formed by the weight assignment in \eqref{eq:c-weights}. This comes from the fact that the diagonal elements of the top-left block of dimension $n$ of $\tilde{C}(k)-H$ are strictly positive by enforcing $0<\gamma<\underline{c}$. While this is a conservative bound guaranteeing that $\rho(\tilde{C}(k)-H)<1$~$\forall k$ it gives a simple bound for the values of $\gamma$.
        \item The fact that $M_1^2=\mathbf{0}_{\tilde{n}\times\tilde{n}}$ makes several terms to be cancelled. The remaining products are such that they make the correction for the consensus to reach the average (as demonstrated in the simulations). The proof of this part is tedious and omitted due to space limitation. We will include it in the extended version of the paper.
\end{itemize}

Based on the aforementioned properties of the product of $\beta-$length $M(k)$ matrices, it is guaranteed that the state variables of the networked agents executing the RPPAC algorithm converge to the average consensus value and concurrently their surplus variables are driven to $0$, since the network is jointly strongly connected after $\bar{\tau}+1$ steps.

\begin{rem}
    The convergence rate of the RPPAC algorithm is driven by the convergence of the product $\bar{E}_{(\beta)}$ to $\mathbf{0}_{\tilde{n} \times \tilde{n}}$. In other words, the algorithm achieves average consensus when the surplus variables at each agent $s_i(k)$ have been completely released 
    to the corresponding state variable $x_i(k)$.
\end{rem}


\printbibliography 

\end{document}